\def\Let@{\def\\{\notag\math@cr}}
\newtheorem{theorem}{Theorem}
\newcommand{\sddots}{\raisebox{0pt}{$\scalebox{.6}{$\ddots$}$}}
\begin{document}

\title{Real spectra in one-dimensional single-band non-Hermitian Hamiltonians}
\author{Haoyan Chen}
\email{chenhaoyan@stu.pku.edu.cn}
\affiliation{International Center for Quantum Materials, School of Physics, Peking University, Beijing 100871, China}
\author{Yi Zhang}
\email{frankzhangyi@gmail.com}
\affiliation{International Center for Quantum Materials, School of Physics, Peking University, Beijing 100871, China}
\date{\today}

\begin{abstract}
In general, the energy spectrum of a non-Hermitian system turns out to be complex, which is not so satisfactory since the time evolution of eigenstates with complex eigenvalues is either exponentially growing or decaying. Here, we provide a sufficient and necessary condition of the real spectrum under open boundary conditions for one-dimensional non-Hermitian tight-binding Hamiltonians. The necessity is directly related to the fact that the generalized Brillouin zone in one dimension is a closed loop with the origin in its interior. We also establish the sufficiency by analytically determining when the preimage of the characteristic polynomial contains a loop and showing that this loop is just the generalized Brillouin zone itself, in some simple models first and then general one-band models. We demonstrate our conclusions on various non-Hermitian model examples with longer-range hopping. Our results indicate that real spectra are more common than one may have expected in non-Hermitian systems and are helpful for designing non-Hermitian models with real spectra.
\end{abstract}

\maketitle

\section{Introduction}

Non-Hermitian Hamiltonians can offer effective descriptions of open systems interacting with the external environment, such as the localization of metal with impurities \cite{Hatano1996, Hatano1997, Hatano1998} and the radioactive decay in nuclear reactions \cite{Feshbach1958, Feshbach1962}. Interestingly, non-Hermiticity gives rise to fascinating concepts and phenomena dramatically different from Hermitian systems \cite{Emil2021, Yong2017, Zhesen2019, Ashida2020, Kohei2019, Kohei2019symmetry, Zhou2019, Carlstrom2019, Kawabata2022, Xue2022, Rui2022, Matsumoto2020, Liu2019, Longhi2022, Kawabata2019, Edvardsson2019, Leykam2017, Shen2018}. In particular, the energy spectrum of a non-Hermitian Hamiltonian will generally be complex under the periodic boundary condition (PBC). When the non-Hermitian winding number is nonzero \cite{Zhang2020, Nobuyuki2020,Borgnia2020}, the non-Hermitian skin effect occurs, i.e., extensive eigenstates become localized at the system's boundary under the open boundary condition (OBC), and the energy spectrum collapses into analytic curves inside the region encircled by the periodic boundary spectrum \cite{Yao2018, Kohei2020, Zhang2022, Song2019, Longhi2021, Claes2021, Gong2018}. This sensitivity to boundary conditions leads to the failure of the conventional bulk-boundary correspondence in non-Hermitian systems \cite{Hasan2010, Qi2011, Lee2016, Guo2021, Kunst2018, Xiong2018, Herviou2018, Song2019b, Xiao2020}. Generally, the energy spectrum of a 1D non-Hermitian system under OBC is obtainable via the generalized Brillouin zone (GBZ) method \cite{Palle1960, Bottcher2005, Zhang2020, Zhesen2020, Kazuki2019, Yao2018, Okuma2022}. However, this method may become invalid in higher dimensions in case of open boundary conditions along more than one direction \cite{Kohei2020}.

Despite its attractive properties, non-Hermiticity breaks some axioms of quantum mechanics, such as the unitarity of time evolution required for probability conservation. Thus, non-Hermitian Hamiltonians are mostly phenomenological instead of fundamental descriptions of physical systems. On the other hand, a purely real spectrum, e.g., given certain symmetries of the Hamiltonian, may recover time evolution's unitarity. Besides, complex spectra in non-Hermitian systems may lead to ambiguity in the definition of ground states \cite{guo2022variational}: the state whose real part of the energy is minimal, which is conventional and consistent with Hermitian quantum systems, and the state whose imaginary part is maximal, which grows fastest under time evolution and dominates the long-time behavior of the system. Such ambiguity disappears for non-Hermitian systems exhibiting real spectra, which offers a solid foundation for many-body physics in non-Hermitian systems. For example, the modified harmonic oscillator Hamiltonian $H=p^2+x^2(ix)^\epsilon$ is $\mathcal{PT}$-symmetric when $\epsilon$ is real. Then, the eigenvalues are all real if such $\mathcal{PT}$ symmetry is not spontaneously broken, as observed for $\epsilon\geq 0$ \cite{Bender1998, Bender2007}. Non-Hermitian 2D topological insulators can also have real spectra if they preserve a variant of time-reversal symmetry and the pseudo-Hermiticity condition \cite{Kawabata2020}. However, it is currently unclear whether non-Hermitian systems can have real spectra without symmetry constraints and, if so, the corresponding conditions. 

Here, we show that real spectra can be achieved in 1D single-band non-Hermitian systems under OBC without symmetry protection: the preimage of the characteristic polynomial on the real axis must contain a closed loop that encircles the origin (we will call this the loop property for simplicity in the following). First, we show that the generalized Brillouin zone for a 1D single-band non-Hermitian tight-binding Hamiltonian is always a close loop in the complex plane with $0$ in its interior \cite{Zhang2020}. We demonstrate this by showing that any continuous curve that links the origin and infinity must intercept the generalized Brillouin zone. This implies that single-band GBZ in the complex plane has the loop property. By definition of GBZ, we conclude that the real OBC spectrum requires the preimage of the characteristic polynomial on the real axis to have the loop property. To show that the converse statement is also true, we first extend the famous Hatano-Nelson model to the complex case. By directly calculating the eigenvalues \cite{smith1985}, we show that the complex Hatano-Nelson model has a real OBC spectrum if and only if the multiple of two hopping coefficients is real and positive. Then we illustrate that this condition and the loop property of the preimage are identical. Next, we add a next-nearest hopping to the original Hatano-Nelson model. In this case, we give the condition where the loop property holds. This is accomplished by establishing the equivalence to three intersection points of the preimage on the real axis. Furthermore, we show that under this condition, the GBZ is just the loop in the preimage of the characteristic polynomial by tracing the flow of the roots as energy varies. By the definition of GBZ, the equivalence between the loop property and real OBC spectrum is also established. Finally, we apply such method to show the equivalence between the loop property of the preimage and the real spectrum under OBC for general 1D single-band non-Hermitian Hamiltonians.

\section{Generalized Brillouin zone in 1D non-Hermitian systems}

\subsection{Generalized Brillouin zone}

We begin by reviewing the theory of non-Bloch theory and generalized Brillouin zone for 1D non-Hermitian systems. The Hamiltonian of a single-band non-Hermitian tight-binding model in one dimension can be written as:
\begin{equation}
    H=\sum\limits_{m,n}t_{m-n}c_m^\dagger c_n=\sum\limits_k\sum\limits_{l}t_{l}e^{-ikl}c_k^\dagger c_k\equiv \sum\limits_k H(k)c_k^\dagger c_k, \label{Hamiltonian}
\end{equation}
where the Hamiltonian in the momentum space is:
\begin{equation}
    H(k)=\sum\limits_{n=-p}^{q}t_ne^{ikn},
\end{equation}
and $p$ ($q$) denotes the maximum hopping range to the left (right). For example, the Hatano-Nelson model with only nearest-neighbor hopping has $p=q=1$. To obtain the energy spectrum under OBC in the thermodynamic limit (namely, lattice size $N\rightarrow\infty$), we analytically extend the momentum $k$ from the real axis to the complex plane and define $\beta:=e^{ik}$. In this case, the trajectory of $\beta$ will vary from the conventional Brillouin zone, i.e., the unit circle \cite{Yao2018, Kohei2020}. Therefore the Hamiltonian $H(k)$ will become a characteristic polynomial $a(\beta)$:
\begin{equation}
    a(\beta):=H(e^{ik}\rightarrow \beta)=\sum\limits_{n=-p}^qt_n\beta^n, \label{characteristic polynomial}
\end{equation}
where we assume $p$, $q$ are positive integers and $a_{-p}$ and $a_q$ are both nonzero. Solve the characteristic equation $\det(H(\beta)-E)=0$ for arbitrary $E\in\mathbb{C}$, we then obtain $p+q$ roots $\beta_i(E), i=1,2,\cdots,p+q$. Arrange these roots according to their norms \cite{Kazuki2019}:
\begin{equation}
    |\beta_1(E)|\leq|\beta_2(E)|\leq|\beta_3(E)|\leq\cdots\leq|\beta_{p+q}(E)|, \label{sort}
\end{equation}
and $E$ belongs to the open boundary energy spectrum $\sigma(H_{\text{OBC}})$ if and only if:
\begin{equation}
    |\beta_p(E)|=|\beta_{p+1}(E)| \label{GBZ}.
\end{equation}

The generalized Brillouin zone is constituted by $\beta_p(E)$ and $\beta_{p+1}(E)$ that satisfy Eq. (\ref{GBZ}) \cite{Kazuki2019,Zhang2020,Zhesen2020}. For Hermitian systems, the generalized Brillouin zone obtained this way is just the conventional Brillouin zone $|\beta|=1$. 

\subsection{The Hatano-Nelson model}

One of the simplest examples of non-Hermitian Hamiltonian is the Hatano-Nelson model \cite{Hatano1996}: 
\begin{equation}
    H_{1}=\sum\limits_i(t_lc_{i+1}^\dagger c_i+t_rc_{i-1}^\dagger c_i). \label{Hatano_Nelson}
\end{equation}
We assume $t_l, t_r>0$ for simplicity. The corresponding characteristic equation is given by:
\begin{equation}
    a_{1}(\beta)=t_l\beta^{-1}+t_r\beta=E,
    \label{HN_characteristic_polynomial}
\end{equation}
whose two roots $\beta_{\pm}$ satisfy:
\begin{equation}
    \beta_++\beta_-=\frac{E}{t_r}, \quad \beta_+\beta_-=\frac{t_l}{t_r}.
\end{equation}
From $|\beta_+|=|\beta_-|$, we obtain:
\begin{equation}
    |\beta_+|=|\beta_-|=\sqrt{\Big|\frac{t_l}{t_r}\Big|}:=r. \label{GBZcondition}
\end{equation}
Let $\beta_+=re^{i\theta}, \beta_-=re^{i\theta'}$, then:
\begin{equation}
    \theta+\theta'=0, \quad E=2\sqrt{t_lt_r}\cos\theta, \quad \theta\in[0,\pi].
\end{equation}
In contrast to the complex PBC spectrum $E_{\text{PBC}}=t_le^{-ik}+t_re^{ik}$ that forms an ellipse in the complex plane, the OBC spectrum flattens into a line on the real axis.  

\subsection{Spectra under different boundary conditions}

The significant difference between PBC and OBC spectra indicates that non-Hermitian systems may be highly sensitive to boundary conditions. In fact, if we impose a semi-infinite boundary condition (SIBC) on a 1D non-Hermitian lattice, then $E$ belongs to the SIBC spectrum $\sigma_{\text{SIBC}}(H)$ if and only if $E$ belongs to the PBC spectrum $\sigma_{\text{PBC}}(H)$ or its non-Hermitian winding number $w(E)\neq 0$, where \footnote{This result is an application of the famous Atiyah-Singer theorem, which relates the analytical index of Fredholm operator to the topological index, i.e., winding number, in a special case.}:
\begin{equation}
    w(E)=\int_0^{2\pi}\frac{dk}{2\pi i}\frac{d}{dk}\log\det(H(k)-E).
\end{equation}
Namely, the SIBC spectrum is just the PBC spectrum together with the region it encloses that has non-zero winding number \cite{Nobuyuki2020}. It is also obvious that the OBC spectrum is contained in the SIBC spectrum:
\begin{equation}
    \sigma_{\text{OBC}}(H)\subset\sigma_{\text{SIBC}}(H),
\end{equation}
because the OBC is given by the SIBC together with an additional boundary condition at the other end \cite{Nobuyuki2020}. A less obvious conclusion is that the winding number of the OBC spectrum is always zero \cite{Nobuyuki2020, Zhang2020}. That means the OBC spectrum cannot contain a circular or elliptical structure - it must collapse into some curves inside the region enclosed by the PBC spectrum. The relations of the energy spectra under different boundary conditions are shown in Fig. \ref {HNmodel}(a). 

The unusual behavior of the non-Hermitian spectrum under different boundary conditions demonstrates that adding a small boundary perturbation term to a non-Hermitian system under OBC can dramatically change its spectrum as long as the lattice size $N$ is large enough (see Fig. \ref{HNmodel}(b) and Ref. \cite{Guo2021}). This can be seen from the $\epsilon$-pseudospectrum \cite{Okuma2022, Okuma2021, Okuma2020, Trefethen2005}, which is defined as the set of spectra of Hamiltonian plus a perturbed matrix whose 2-norm is bounded by $\epsilon$:
\begin{equation}
    \sigma_\epsilon(H)=\bigcup_{||\eta||\leq\epsilon}\sigma(H+\eta).
\end{equation}
Indeed, for open boundary non-Hermitian Hamiltonian with lattice size $N$, the following formula holds \cite{Okuma2022, Trefethen2005}:
\begin{equation}
    \lim\limits_{\epsilon\rightarrow 0}\lim\limits_{N\rightarrow\infty}\sigma_\epsilon(H_{\text{OBC}}^{(L)})=\sigma_{\text{SIBC}}(H),
\end{equation}
which explains that the OBC spectrum is unstable against any small boundary additional terms in the thermodynamic limit for the Hatano-Nelson model \cite{Guo2021}. 

The significant difference between PBC and OBC spectra also indicates that eigenstates under various boundary conditions may have different behaviors - a phenomenon called the non-Hermitian skin effect \cite{Yao2018, Zhang2020, Nobuyuki2020,Borgnia2020}. Namely, in the original Hermitian systems, the eigenstates (except for the edge states) are Bloch states composed of a plane wave and a periodic function \cite{ashcroft2011}. However, in non-Hermitian systems under open boundary conditions, the eigenstates may decay with respect to the distance from one end of the system, as shown in Fig. \ref{HNmodel}(c). In this case, the crystal momentum $k$ takes complex values, and GBZ $\beta\equiv e^{ik}$ differs from the conventional Brillouin zone - a unit circle. Nevertheless, the GBZ is still a closed loop, as demonstrated later.

Another aspect of non-Hermitian Hamiltonians' sensitivity toward the boundary condition lies in their imprecise linear-algebra numerics when the lattice size $N$ is large \cite{Zhesen2020, Colbrook2019}. Thus, instead of exact diagonalization, we use an algorithm based on the generalized Brillouin zone theory applicable for universal single-band 1D tight-binding non-Hermitian Hamiltonians, see Appendix \ref{compute_OBC}.

\begin{figure}[!ht]
    \centering
    \includegraphics[width=\linewidth]{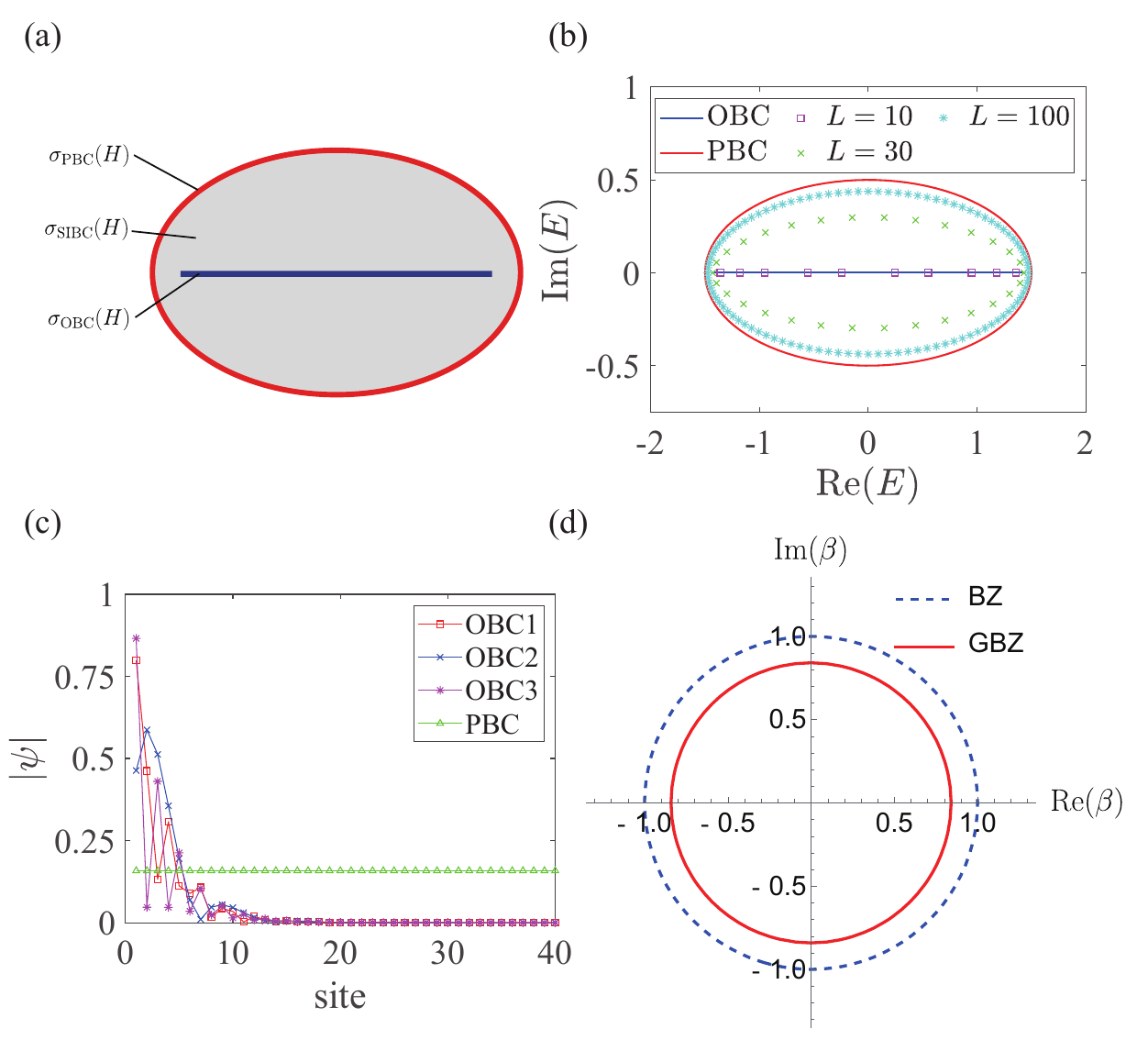}
    \caption{(a) Energy spectrum for the Hatano-Nelson model under PBC (red), SIBC (grey), and OBC (blue). (b) Eigenvalues for the Hatano-Nelson model  ($t_l=0.5$, $t_r=1$) under PBC, OBC, and a particular boundary condition with $0.01c_L^\dagger c_1+0.01c_1^\dagger c_L$ and various lattice sizes $N=10, 30, 100$. (c) The amplitude distributions for three randomly chosen eigenstates of the Hatano-Nelson model ($t_l=0.5$, $t_r=1$, $L=40$) under OBC illustrate their decay away from the boundary, while the eigenstates under PBC are still Bloch states. (d) The conventional Brillouin zone of the Hermitian systems and the generalized Brillouin zone of the non-Hermitian Hatano-Nelson model with $t_l=0.5$ and $t_r=1$.}
    \label{HNmodel}
\end{figure}

\subsection{Properties of the GBZ}

In general, one can show that for a general single-band Hamiltonian in Eq. (\ref{Hamiltonian}), the GBZ is always a closed curve that encircles the origin of the complex plane \cite{Zhang2020, Shapiro2019}. In fact, for an arbitrary continuous curve $\gamma(t): [0,1]\rightarrow\mathbb{C}$ that connects $z=0$ and $z=\infty$:
\begin{equation}
    \gamma(0)=0, \quad \lim_{t\rightarrow 1^-}|\gamma(t)|=\infty.
\end{equation}

Note that $\gamma(t)$ is a solution of the characteristic equation when $E=a(\gamma(t))$:
\begin{equation}
    \det(H(e^{ik}\rightarrow\beta)-E)=a(\beta)-E=a(\beta)-a(\gamma(t))=0. \label{E=a(gamma(t))}
\end{equation}
Therefore:
\begin{equation}
    \gamma(t)\in\{\beta_1(a(\gamma(t))),\beta_2(a(\gamma(t))),\cdots,\beta_{p+q}(a(\gamma(t)))\},
\end{equation}
where $\beta_i(a(\gamma(t))), i=1,2,\cdots,p+q$ are roots of Eq. (\ref{E=a(gamma(t))}) which are arranged as in Eq. (\ref{sort}). The first term $a_{-p}\beta^{-p}$ and the last term $a_q\beta^q$ of the characteristic polynomial (\ref{characteristic polynomial}) indicate that for $|E|\rightarrow\infty$ we have:
\begin{eqnarray}
    |\beta_1(E)|,|\beta_2(E)|,\cdots,|\beta_p(E)|&\rightarrow& 0, \nonumber\\
    |\beta_{p+1}(E)|,|\beta_{p+2}(E)|,\cdots,|\beta_{p+q}(E)|&\rightarrow& \infty, \label{root_limit}
\end{eqnarray}
When $t\rightarrow 0^+$, $|\gamma(t)|\rightarrow 0$, $|a(\gamma(t))|\rightarrow\infty$, then $\gamma(t)$ must appear among the first $p$ roots:
\begin{equation}
    \gamma(t)\in\{\beta_1(a(\gamma(t))),\beta_2(a(\gamma(t))),\cdots,\beta_p(a(\gamma(t)))\};
\end{equation} 
when $t\rightarrow 1^-$, on the other hand, $|\gamma(t)|\rightarrow\infty$, $|a(\gamma(t))|\rightarrow\infty$, $\gamma(t)$ appears among the last $q$ roots:
\begin{equation}
    \gamma(t)\in\{\beta_{p+1}(a(\gamma(t))),\beta_{p+2}(a(\gamma(t))),\cdots,\beta_{p+q}(a(\gamma(t)))\}.
\end{equation}

Since $\gamma(t)$ is a continuous curve, there must exist a critical point $t=t_c$ between these two cases as $t$ goes from $0$ to $1$. At this critical point $t_c\in (0, 1)$, we have:
\begin{eqnarray}
    \gamma(t_c)&\in&\{\beta_p(a(\gamma(t_c))),\beta_{p+1}(a(\gamma(t_c)))\}, \nonumber\\ 
    |\gamma(t_c)|&=&|\beta_p(a(\gamma(t_c)))|=|\beta_{p+1}(a(\gamma(t_c)))|, \label{gamma2}
\end{eqnarray}
which implies that $\gamma(t_c)$ belongs to the GBZ. Since $\gamma(t)$ is an arbitrary curve connecting $0$ and $\infty$, and since the OBC spectrum is always connected \cite{Ullman1967}, the GBZ should also be connected, and we conclude that the GBZ for 1D single-band non-Hermitian Hamiltonian in Eq. (\ref{Hamiltonian}) is a closed loop with $z=0$ in its interior. For example, the GBZ of the Hatano-Nelson model is in Eq. (\ref{GBZcondition}) and essentially a circle, as shown in Fig. \ref{HNmodel}(d).

These results imply that if the energy spectrum of a single-band non-Hermitian Hamiltonian is real under OBC, the preimage of $\mathbb{R}$ under its characteristic polynomial $a$, denoted as $a^{-1}(\mathbb{R})$ and contains the GBZ, must have the loop property (contain a loop that encircle $0$). Thus, we have obtained a necessary condition for real spectra in 1D non-Hermitian systems under OBCs. Here, the preimage is a mathematical concept - for a given function $f:X \rightarrow Y$ and a subset $B\subseteq Y$, the preimage of $B$ under the function $f^{-1}(B)$ is the set:
\begin{equation}
    f^{-1}(B)=\{x\in X:f(x)\in B\}.
\end{equation}
Namely, the preimage describes the ``inverse" of a function. In our case, the function refers to the characteristic polynomials of the non-Hermitian models, and the subset $B$ is the real axis.

Then, one may wonder whether such a condition is sufficient. In the following, we will consider the complex-hopping Hatano-Nelson model and the Hatano-Nelson model with a next-nearest-neighbor hopping term and demonstrate the condition's sufficiency in these two models analytically. Further, we extend the method for the Hatano-Nelson model with a next-nearest-neighbor hopping to show the sufficiency in general 1D single-band non-Hermitian models. We also refer the audience to Appendix \ref{example} for additional numerical examples.

\section{Applications and Analysis for model examples} \label{NNNhopping}

\subsection{The Hatano-Nelson model with complex hopping}

First, we revisit the Hatano-Nelson model in Eq. (\ref{Hatano_Nelson}), whose characteristic polynomial $a_1(\beta)$ is given in Eq. (\ref{HN_characteristic_polynomial}). Under a similarity transformation \cite{Hatano1996, Hatano1997, Kohei2020, Okuma2022, Rivero2021}:
\begin{equation}
    U_r^{-1} c_i^\dagger U_r=r^i c_i^\dagger, \quad U_r^{-1} c_i U_r=r^{-i} c_i, \quad r>0, \label{similar}
\end{equation}
which preserves the energy spectrum of the Hamiltonian in Eq. (\ref{HN2}) under OBC. Under such a similarity transformation, also called an imaginary gauge transformation due to its formality as a gauge transformation $c_i^\dagger\rightarrow e^{i\varphi_i}c_i^\dagger, c_i\rightarrow e^{-i\varphi_i}c_i$ with purely imaginary phases $\varphi_i$, the Hatano-Nelson model transforms into:
\begin{equation}
H_1':=U_r^{-1}H_1U_r=\sum\limits_i(t_lrc_{i+1}^\dagger c_i+t_r r^{-1}c_{i-1}^\dagger c_i).
\end{equation}

Now we generalize the Hatano-Nelson model to the complex one where the hopping coefficients are now allowed to be complex. The energy spectrum of a $N$-site complex Hatano-Nelson model under OBC is equivalent to the eigenvalues of a $N\times N$ tri-diagonal matrix:
\begin{equation}
    \begin{pmatrix}
        0 & b &  & & & \\
        a & 0 & b & & \\
         & a & \sddots & \sddots & \\
         & & \sddots & \sddots & b\\
         & & & a & 0
    \end{pmatrix},
    \label{Toeplitz}
\end{equation}
where $a, b \in \mathbb{C}$. As we show in Appendix \ref{complex_HN}, the eigenvalues of the matrix in Eq. (\ref{Toeplitz}) are \cite{smith1985}:
\begin{equation}
    \lambda_n=2(ab)^{\frac{1}{2}}\cos\frac{n\pi}{N+1}, \quad n=1,2,\cdots,N.
\end{equation}
Thus, all eigenvalues are real if and only if $(\arg (a) + \arg (b))\mod 2\pi = 0$, namely, $ab>0$. 

\begin{figure}[ht]
    \centering
    \includegraphics[width=0.98\linewidth]{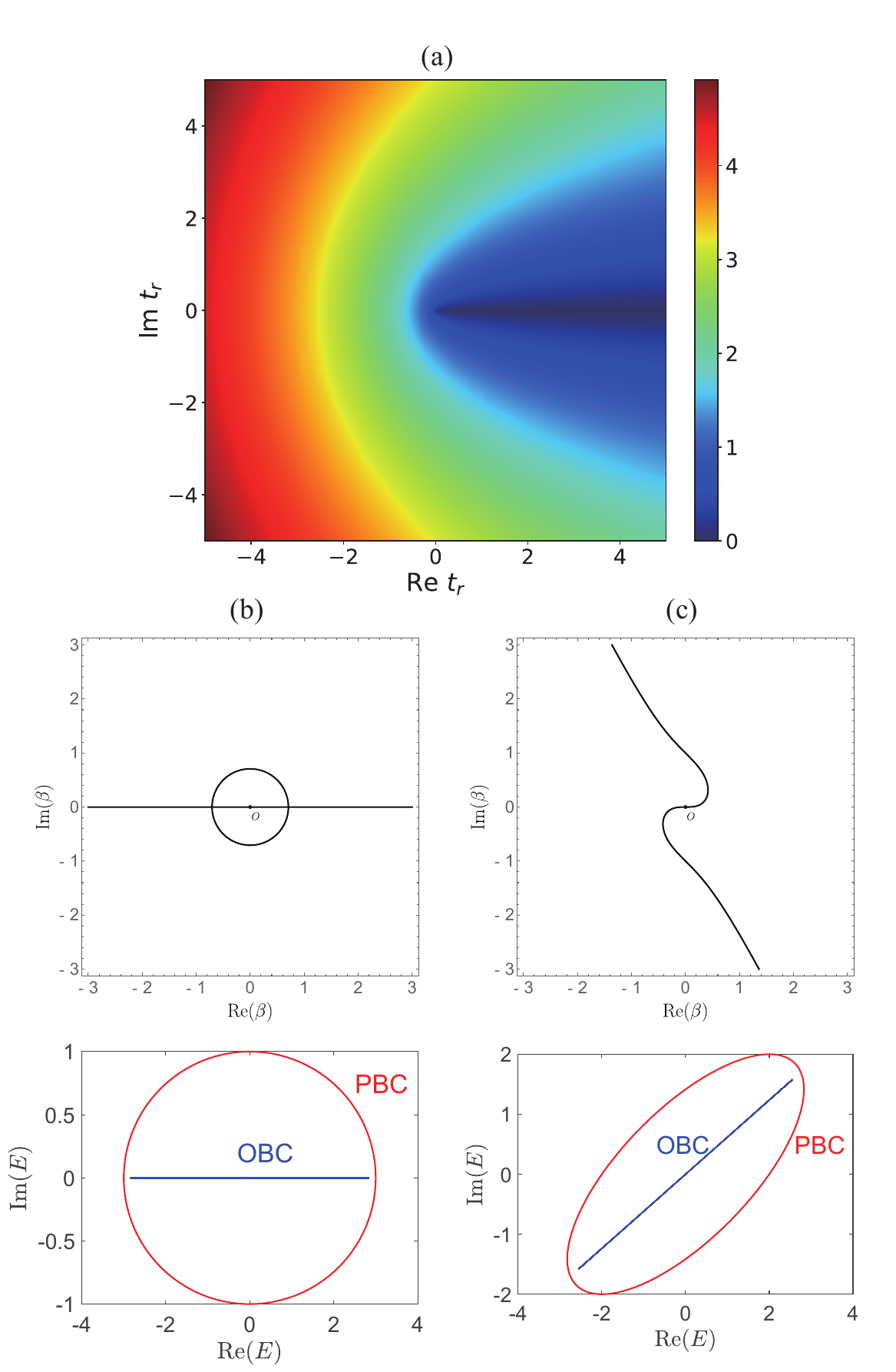}
    \caption{(a) The maximum imaginary part of the energy spectrum of the complex Hatano-Nelson model under OBC as a function of $t_r$ in the complex plane when $t_l=1$ indicates that the energy spectrum is real if and only if $t_r$ is real and positive. (b) and (c) The preimage of characteristic polynomial $a_1^{-1}(\mathbb{R})$ (above) for the Hatano-nelson model with $t_l=1$ (b) possesses the loop property for real hopping $t_r=2$ and (c) becomes a single curve for complex hopping $t_r=1+2i$; consistently, the calculated energy eigenvalues (below) exhibit a real and a complex OBC spectrum, respectively. }
    \label{max_imag}
\end{figure}

Without loss of generality, we set hopping $t_l=1$, which we can always achieve with a similarity transformation $r=t_l^{-1}$. Then, for a complex-valued hopping $t_r$, it is straightforward to see that the Hatano-Nelson model possesses a real OBC spectrum if and only if $t_r$ is purely real and positive, as shown in Fig. \ref{max_imag}(a). For more general cases, the condition for a real OBC spectrum is equivalent to $t_l t_r>0$.

Next, we compare such benchmark conditions with our theoretical criteria. First, we demonstrate the numerical results of several examples in Fig. \ref{max_imag}: when $t_r>0$, the preimage $a_1^{-1}(\mathbb{R})$ contains a loop encircling the origin, and the OBC spectrum is real, as shown in Fig. \ref{max_imag}(b); on the contrary, for a complex-valued $t_r$, the preimage will generally become a single curve and pass through the origin without forming a loop, and the OBC spectrum is no longer purely real, as in Fig. \ref{max_imag}(c).

More generally, we can also establish the condition $t_r>0$ for the loop property of the preimage $a_1^{-1}(\mathbb{R})$ analytically as follows. For real $t_r$, $a_1^{-1}(\mathbb{R})$ contains the real axis $\mathbb{R}$ and can self-intersect where the derivative of the characteristic polynomial is zero:
\begin{equation}
    a_1'(\beta)=-\frac{1}{\beta^2}+t_r=0,\quad \beta\in\mathbb{R},
\end{equation}
which has two solutions $\beta_c=\pm\sqrt{t_r}$ when $t_r>0$ and no real solution otherwise. Hence critical points exist if and only if $t_r>0$ and there must have either two or zero critical points. When the preimage contains a loop that encloses the origin, it has to cross the real axis at least twice and leaves two critical points on it, which requires $t_r>0$. On the other hand, when $t_r>0$, the asymptotic behavior of $a(\beta)$ as $\beta$ goes to infinity indicates that any curve in the preimage, with the exception for the real axis, has to start at one intersection point and end at another. Namely, together with the complex conjugate symmetry, the preimage must form a loop that encircles the origin. 

As for a complex $t_r$, the preimage cannot have the loop property. Otherwise, the loop in $a_1^{-1}(\mathbb{R})$ must cross the real axis at least twice, where we will have real points on $a_1^{-1}(\mathbb{R})$, leading to a contradiction: because $a_1(\beta)=1/\beta+t_r\beta$ cannot be real when $\beta\in\mathbb{R}$ and $t_r\in\mathbb{C}\setminus\mathbb{R}$, there cannot be any real point on $a_1^{-1}(\mathbb{R})$ except the origin. In fact, the preimage $a_1^{-1}(\mathbb{R})$ in the complex $t_r$ cases is a single curve that passes through the origin, as shown in Fig. \ref{max_imag}(c). 

From the above argument, we have established the equivalence between the condition $t_r>0$ and the loop property of the preimage $a_1^{-1}(\mathbb{R})$. As for the general case where $t_l\neq 1$, the similarity transformation $\beta\rightarrow r^{-1}\beta$ is a combination of scaling and rotation in the complex plane. Since these conformal mappings preserve angle, whether the preimage has the loop property remains intact. Thus $t_lt_r>0$ is also an equivalent condition for the loop property of $a_1^{-1}(\mathbb{R})$. Since $t_lt_r>0$ is also identical to the real OBC spectrum, we have proved the equivalence between the loop property of the preimage and the real OBC spectrum for the models in this subsection.

\subsection{The Hatano-Nelson model with next-nearest neighbor hopping}

Next, we consider the Hatano-Nelson model with an additional next-nearest-neighbor (NNN) hopping:
\begin{equation}
    H_2=\sum\limits_{i}(c_{i+1}^\dagger c_i+Ac_{i-1}^\dagger c_i+B c_{i-2}^\dagger c_i). \label{HN2}
\end{equation}
 The characteristic polynomial of this Hamiltonian is: 
\begin{equation}
a_2(\beta)=\beta^{-1}+A\beta+B\beta^2.
\end{equation}
Here, we have set the first hopping coefficient to $1$ since this can always achieve with a similarity transformation. Unless transformable to real hopping under certain similarity transformations, complex hopping will generally lead to complex OBC spectra, as we have witnessed for the pristine Hatano-Nelson model in the last subsection. Therefore, we consider real hopping $A, B \in \mathbb{R}$ here afterward. 

Like before, the real axis $\mathbb{R}$ is included in the image $a_2^{-1}(\mathbb{R})$ and may intersect with other parts of the preimage. There can only be either one or three such intersection points, where the characteristic polynomial takes a vanishing derivative:
\begin{equation}
    \frac{da_2}{d\beta}=-\frac{1}{\beta^2}+A+2B\beta=0, \quad \beta\in\mathbb{R}.
    \label{critical point}
\end{equation}

\begin{figure*}[!ht]
    \centering
    \includegraphics[width=0.7\linewidth]{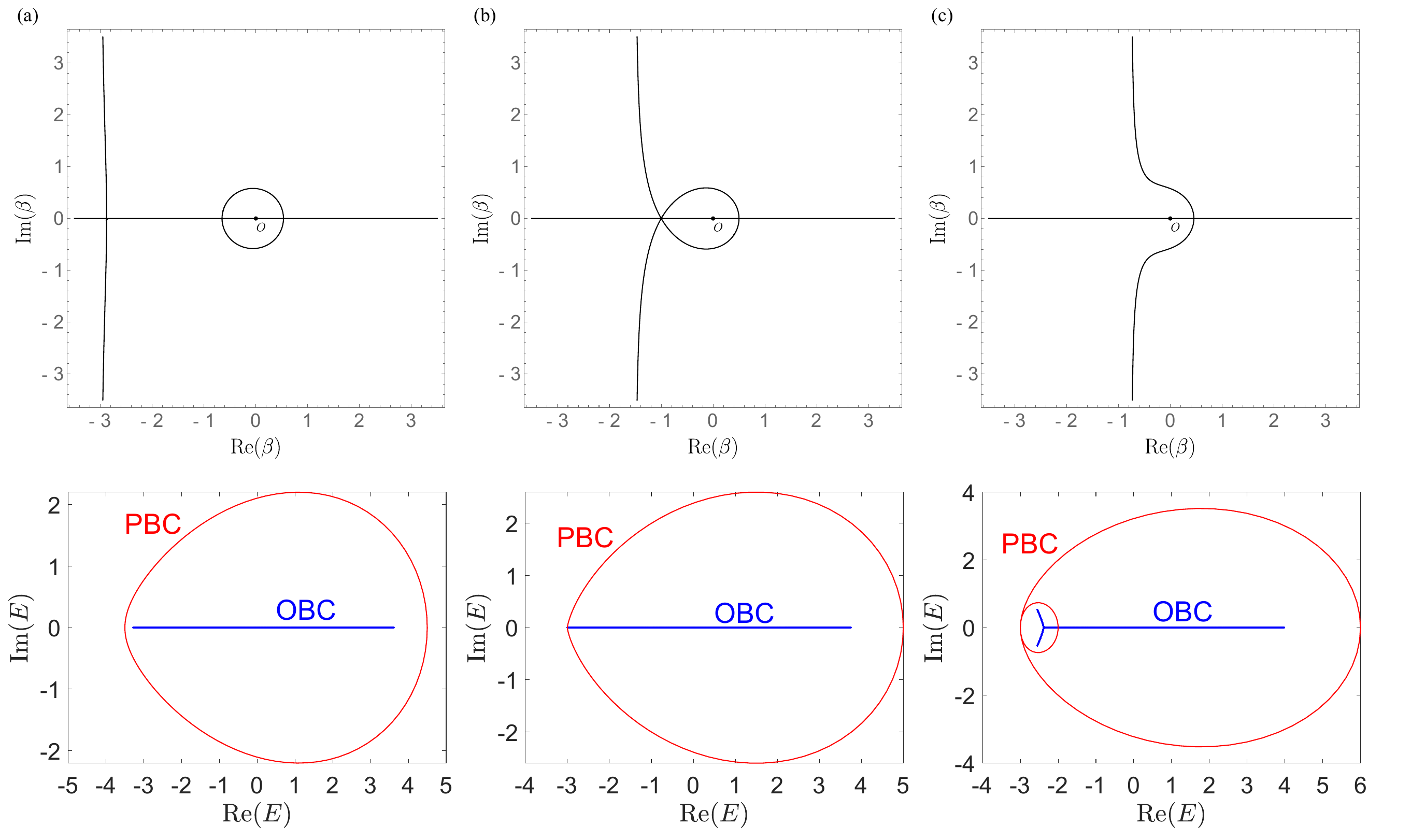}
    \caption{When the preimage of the characteristic polynomial $a^{-1}(\mathbb{R})$ (above) shows the loop property, the OBC spectrum (below) is real; otherwise, when there is an absence of a loop encircling the origin, the OBC spectrum of the corresponding non-Hermitian Hamiltonian is complex. The transition happens at $A^3=27B^2$ for $a(\beta)=1/\beta+A\beta+B\beta^2$. (a) $A=3$, $B=0.5$, $A^3> 27B^2$; (b) $A=3$, $B=1$, $A^3= 27B^2$; (c) $A=3$, $B=2$, $A^3< 27B^2$. }
    \label{net_PBC_OBC}
\end{figure*}

Indeed, we can determine the roots' conditions for a real cubic equation $a_3x^3+a_2x^2+a_1x+a_0=0$ following its discriminant $\Delta$ \cite{cohen1993}:
\begin{equation}
    \Delta=18a_3a_2a_1a_0-4a_2^3a_0+a_2^2a_1^2-4a_3a_1^3-27a_3^2a_0^2, 
\end{equation}
which suggests:
\begin{enumerate}[label=(\roman*)]
    \item if $\Delta>0$, the equation has three distinct real roots;
    \item if $\Delta<0$, the equation has a real root and two complex roots forming a complex-conjugate pair;
    \item if $\Delta=0$ and $a_2^2=3a_3a_1$, the equation has a real-root triplet; if $\Delta=0$ and $a_2^2\neq 3a_3a_1$, the equation has a single real root and a real-root doublet.
\end{enumerate}
Applying these criteria to Eq. (\ref{critical point}), we obtain:
\begin{enumerate}[label=(\roman*)]
    \item When $A^3>27B^2$, there are three intersection points;
    \item When $A^3<27B^2$, there is only one intersection point.
\end{enumerate}

According to the asymptotic behavior of $a_2(\beta)$ as $\beta\rightarrow 0 \text{ and }\infty$, only one curve in $a_2^{-1}(\mathbb{R})$ passes through $0$ and two curves extend to $\infty$, it is easy to see that when $A^3>27B^2$, there must be a loop $C$ that connects two intersection points. In addition, one can show that this loop must contain $z=0$ in its interior. The sketch of proof is: using maximum modulus principle, one can show that for an analytic function $f$ on a connected open set $U$, the imaginary part of $f$ must achieve its maximum and minimum on the boundary $\partial U$ (see Appendix \ref{Appendix: maximum modulus principle}). Let $U$ be the region contained in the loop $C$. If $z=0$ is not contained in $U$, then $a_2$ is analytic in $U$, its imaginary part should reach its maximum and minimum on $C$. However, according to the definition of $a_2^{-1}(\mathbb{R})$, $\text{Im}(a_2)$ is identically zero on $C$, thus $\text{Im}(a_2)$ vanishes identically on $U$. The Cauchy-Riemann equations \cite{lang2003}
\begin{equation}
    \frac{\partial\text{Re}(a_2)}{\partial x}=\frac{\partial\text{Im}(a_2)}{\partial y}, \quad \frac{\partial\text{Im}(a_2)}{\partial x}=-\frac{\partial\text{Re}(a_2)}{\partial y},
\end{equation}
finally lead to the constant real part of $a_2$ in $U$, which is impossible since $a_2(\beta)$ is not a constant function. Thus, besides the real axis, $a_2^{-1}(\mathbb{R})$ contains a line and a loop (only one line) when $A^3>27B^2$  
($A^3<27B^2$), see Fig. \ref{net_PBC_OBC}.

\begin{figure*}[!ht]
    \centering
    \includegraphics[width=0.98\textwidth]{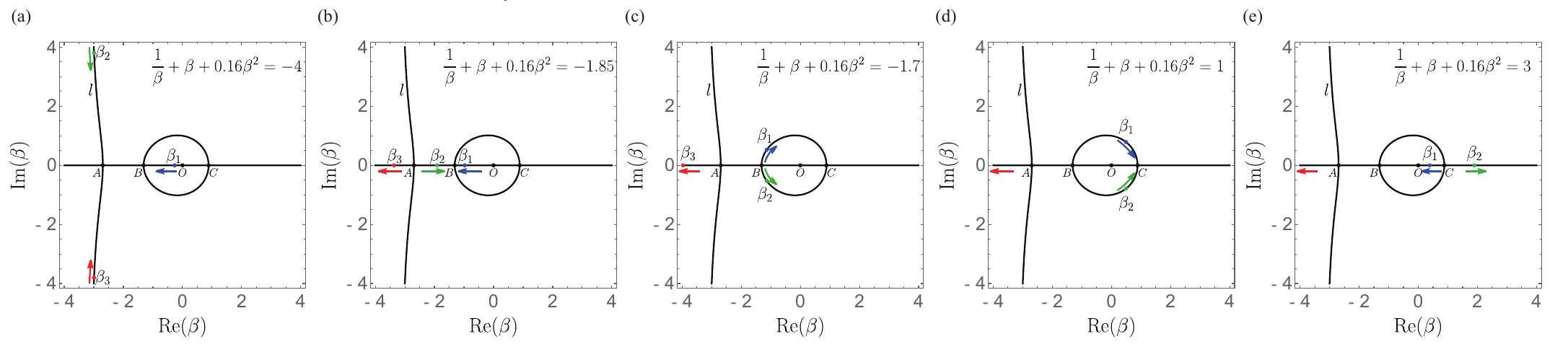}
    \caption{The evolution of the three roots $|\beta_1|\leq|\beta_2|\leq|\beta_3|$ of $1/\beta+\beta+0.16\beta^2=E$ as $E$ goes from $-\infty$ to $\infty$. The arrows denote the flow of these roots as $E$ increases: (a) $E=-4$; (b) $E=-1.85$; (c) $E=-1.7$; (d) $E=1$; (e) $E=3$. The points $A$, $B$, and $C$ correspond to the intersection points $\beta_{i3}$, $\beta_{i2}$, and $\beta_{i1}$, respectively. }
    \label{a=1_b=0.16}
\end{figure*}

Next, we show that when and only when $A^3>27B^2$, the corresponding OBC spectrum is purely real, thus indeed consistent with our criteria. Without loss of generality, we assume that $B>0$ (as the argument is similar for $B<0$), and solve Eq. (\ref{critical point}) for the three intersection points using Cardano's Formula: \begin{align}
    \beta_{i1}&=\frac{1}{u+v}, \\
    \beta_{i2}&=\Big[-\frac{u+v}{2}+\frac{i\sqrt{3}}{2}(u-v)\Big]^{-1}, \\
    \beta_{i3}&=\Big[-\frac{u+v}{2}-\frac{i\sqrt{3}}{2}(u-v)\Big]^{-1}, \\
    u&=\sqrt[3]{B+\sqrt{B^2-\frac{A^3}{27}}}, \\
    v&=\sqrt[3]{B-\sqrt{B^2-\frac{A^3}{27}}},
\end{align}
where we have taken the convention that the square root yields the branch with a positive imaginary part, and the cube root yields the branch with the largest real part. Under such a convention, it is straightforward to see that $\beta_{i3}<\beta_{i2}<0<\beta_{i1}$. Also, $a_2(\beta_{i3})<a_2(\beta_{i2})<a_2(\beta_{i1})$: since $a_2'(\beta)>0$ when $\beta_{i3}<\beta<\beta_{i2}$, we manifestly have $a_2(\beta_{i3})<a_2(\beta_{i2})$; if $a_2(\beta_{i1})\leq a_2(\beta_{i2})$, then one can plot $a_2(\beta)$ and find that some horizontal line $y=M$ can cross $a_2(\beta)$ more than three times, which leads to a contradiction and requires $ a_2(\beta_{i2})<a_2(\beta_{i1})$.

Based on these preparations, we can trace the roots of $a_2(\beta)=E$ as $E$ varies from $-\infty$ to $\infty$, as summarized in Fig. \ref{a=1_b=0.16}. We define $|\beta_1|\leq|\beta_2|\leq|\beta_3|$ following the definition in Eq. (\ref{sort}). When $E<a_2(\beta_{i3})$, $\beta_1$ is real and $\beta_2$, $\beta_3$ are two complex roots on the line $l$; then at $E=a_2(\beta_{i3})$, $\beta_2$, $\beta_3$ merge at the intersection point $\beta_{i3}$; when $a_2(\beta_{i3})<E<a_2(\beta_{i2})$, all roots are real; when $E=a_2(\beta_{i2})$, we have $\beta_1=\beta_2=\beta_{i2}$; at last, when $a_2(\beta_{i2})<E<a_2(\beta_{i1})$, $\beta_3$ is still real, but $\beta_1$ and $\beta_2$ become two complex roots on the loop.

Since we have real hopping in Eq. (\ref{HN2}), $a_2^{-1}(\mathbb{R})$ is guaranteed to possess a reflection symmetry with respect to the real axis. By tracing the roots, we have found that as long as $a_2(\beta_{i2})<E<a_2(\beta_{i1})$, $\beta_1/\beta_2$ lies in the upper/lower half part of the loop respectively, satisfying $|\beta_1|=|\beta_2|$ - the GBZ condition in Eq. (\ref{GBZ}). Thus, the loop in the preimage is a part of the GBZ.

If there are other parts of the GBZ, they must connect to this loop, like Fig. 2(b) in Ref. \cite{Zhesen2020}, as the GBZ is connected. There will be intersection points on the GBZ leading to discontinuities in the derivative of this loop. However, we can rule out such possibilities, as the loop has a continuous derivative when $A^3>27B^2$. Using $\text{Im}[a_2(\beta=x+iy)]=0$, we derive the loop's algebraic expression: 
\begin{equation}
    (x^2+y^2)(A+2Bx)=1, \quad \beta_{i2}<x<\beta_{i1},
\end{equation}
whose derivative is:
\begin{equation}
    \frac{dy}{dx}=-\frac{2Ax+2B(3x^2+y^2)}{y(2A+4Bx)},
\end{equation}
which is apparently continuous and becomes infinite only at the intersection points.

Hence, we conclude that when $A^3>27B^2$, the loop in $a_2^{-1}(\mathbb{R})$ is just the GBZ itself, and thus the OBC spectrum of the corresponding non-Hermitian Hamiltonian must be real. In the Hatano-Nelson model with an additional NNN hopping, the loop property for the preimage of the characteristic polynomial is also identical to the real OBC spectrum. 

\subsection{General 1D single-band models}

In previous subsections, we have shown the consistency of our criteria for the Hatano-Nelson model with either complex hopping or additional NNN term analytically. Here, we argue that our criteria apply to general 1D single-band models with longer-range hopping: in the following, we show that if the preimage $a^{-1}(\mathbb{R})$ of the characteristic polynomial $a(\beta)$ has the loop property, the OBC spectrum of the corresponding non-Hermitian Hamiltonians is real. 

We consider a point on the upper part of the loop in the preimage and assume that it is the $i$-th root $\beta_i(E)$ for some real energy $E$. Then, due to the reflection symmetry with respect to the real axis, there must be another root $\beta_{j}(E)$ that satisfies $\beta_j(E)=\beta_i^*(E)$, thus $|\beta_j(E)|=|\beta_i(E)|$. According to the ordering of the roots in Eq. (\ref{sort}), we can assign $j=i+1$. Since $|\beta_i(E)|\leq|\beta_{i+1}(E)|$, when $E$ varies from $-\infty$ to $\infty$, the flow of $\beta_i(E)$ follows the trajectory: close to $0$ $\rightarrow$ intersection point $A$ $\rightarrow$ upper part of the loop $\rightarrow$ intersection point $B$ $\rightarrow$ close to $0$, while $\beta_{i+1}(E)$ follows the trajectory: close to $\infty$ $\rightarrow$ intersection point $A$ $\rightarrow$ lower part of the loop $\rightarrow$ intersection point $B$ $\rightarrow$ close to $\infty$; see Fig. \ref{general_case} for illustration. Consequently, $|\beta_i(E)|\rightarrow0$ and $|\beta_{i+1}(E)|\rightarrow \infty$ when $|E|\rightarrow\infty$. According to Eq. (\ref{root_limit}), we have $i=p$, which means the loop is the trajectory of $\beta_p(E)$ and $\beta_{p+1}(E)$, and satisfies the GBZ equation $|\beta_p(E)|=|\beta_{p+1}(E)|$. In addition, the loop has a continuous derivative since it is an ``inverse" of a polynomial function. If any other parts of GBZ were connected to this loop, there would be intersection points leading to discontinuous derivatives. Following the arguments in the previous subsections, we can conclude that the loop in the preimage is just the GBZ, which in turn guarantees a real spectrum under OBC for such a non-Hermitian Hamiltonian in the presence of the loop properties. Fully consistent with the criteria, we also summarize numerical results for various non-Hermitian models with longer-range hopping in Appendix \ref{example}.

\begin{figure}[!ht]
    \centering
    \includegraphics[width=0.9\linewidth]{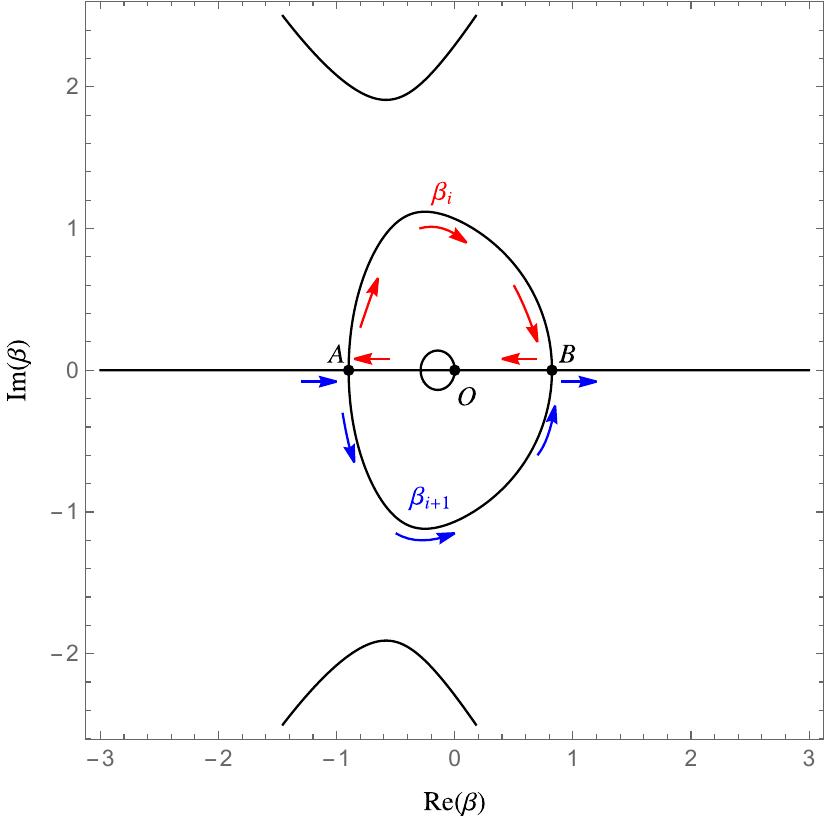}
    \caption{The trajectories of the roots move through the loop of the preimage as $E$ evolves from $-\infty$ to $\infty$. As $|\beta_i(E)|\rightarrow0$ and $|\beta_{i+1}(E)|\rightarrow\infty$ when $|E|\rightarrow\infty$, we conclude that $i=p$ and the loop is consistent with the GBZ definition.}
    \label{general_case}
\end{figure}

Note that our criteria provide a new perspective on real spectra of non-Hermitian Hamiltonians without symmetry constraints. A $\mathcal{PT}$-symmetric non-Hermitian Hamiltonian also has a real spectrum \cite{Bender2007, Bender1998}, given that such $\mathcal{PT}$ symmetry is not spontaneously broken by its eigenstates. In comparison, the non-Hermitian models presented in this work are not $\mathcal{PT}$-symmetric in general.

Let's take the Hatano-Nelson model as an example. The $\mathcal{PT}$ operator acts as \cite{PhysRevB.106.L121102}:
\begin{equation}
    \mathcal{PT}c_i(\mathcal{PT})^{-1}=c_{N+1-i}, 
\end{equation}
\vspace{-2em}
\begin{equation}
    \mathcal{PT}c_i^\dagger(\mathcal{PT})^{-1}=c_{N+1-i}^\dagger,
\end{equation}
\vspace{-2em}
\begin{equation}
    \mathcal{PT}i(\mathcal{PT})^{-1}=-i,
\end{equation}
where $N$ is the total number of lattice sites. Consequently, the action of $\mathcal{PT}$ on the Hatano-Nelson model in Eq. (\ref{Hatano_Nelson}) results in:
\begin{equation}
    (\mathcal{PT})H_1(\mathcal{PT})^{-1}=\sum\limits_i(t_r^*c_{i+1}^\dagger c_i+t_l^*c_{i-1}^\dagger c_i)=H_1^\dagger, \label{pseudo-Hermitian}
\end{equation}
suggesting that $H_1$ is not $\mathcal{PT}$-symmetric unless it is Hermitian. Despite the lack of such symmetry constraints, our study indicates that $H_1$ possesses a real spectrum as long as the condition $t_l t_r>0$ is met. Note that although $H_1$ is not $\mathcal{PT}$-symmetric, it is explicitly $\mathcal{PT}$-pseudo-Hermitian, as shown in Eq. (\ref{pseudo-Hermitian}). (Pseudo-Hermitian Hamiltonian satisfies $SHS^{-1}=H^\dagger$ \cite{Mostafazadeh2002, Mostafazadeh2010}, which does not count as a systematic symmetry conventionally.) Such pseudo-Hermiticity is expected, as a non-Hermitian Hamiltonian with a real spectrum must be pseudo-Hermitian, stated in Ref. \cite{Mostafazadeh2002}. Namely, pseudo-Hermiticity is a necessary condition for the real spectrum.

\section{Conclusion and discussion}

In summary, we have provided a sufficient and necessary condition for real spectra of 1D single-band non-Hermitian systems under OBC: the preimage of the non-Hermitian Hamiltonian's characteristic polynomial on $\mathbb{R}$ contains a closed loop with the origin inside. The condition's necessity is straightforward to establish, as 1D GBZ must be a closed loop with the origin inside. Then, we put forward analytical proofs and numerical results to establish the condition's sufficiency. For the Hatano-Nelson model with complex hopping, we obtain the energy spectrum analytically and show that both the condition for real OBC spectrum and the condition for the loop property of the preimage of the characteristic polynomial on $\mathbb{R}$ are equivalent: the multiple of the two hopping coefficients should be real and positive. Similarly, we derive analytical conditions for the real OBC spectrum and the loop property of the preimage for the Hatano-Nelson model with additional NNN hopping and establish their correspondence. We also provide numerical results on model examples with higher-order characteristic polynomials in Appendix \ref{example}. Finally, we give an argument for the condition's sufficiency in general 1D single-band models, thus providing a new perspective of real spectra in non-Hermitian systems without symmetry constraints. 

Our results help search and design 1D non-Hermitian systems with real spectra and indicate that even without symmetry constraints, a wide range of non-Hermitian systems can achieve real spectra under OBCs. Our criteria also avoid direct calculations of the energy spectra and eigenstates, the typical bottleneck and source of accumulated error in non-Hermitian systems, especially for large system sizes.

\emph{Acknowledgments} \textemdash We acknowledge helpful discussions with Yongxu Fu. We also acknowledge support from the National Key R\&D Program of China (No.2022YFA1403700) and the National Natural Science Foundation of China (No.12174008 \& No.92270102).

\appendix

\section{The OBC spectrum of complex Hatano-Nelson model}
\label{complex_HN}

As shown in the main text, we need to find the eigenvalues of the matrix in Eq. (\ref{Toeplitz}) for $a, b \in \mathbb{C}$ to obtain the OBC spectrum of the complex Hatano-Nelson model \cite{smith1985}. Let $\lambda$ be an eigenvalue of Eq. (\ref{Toeplitz}) and $v=(v_1, v_2, \cdots, v_N)^T$ be the corresponding eigenvector, then the eigenvalue equation leads to:
\begin{align*}
    -\lambda v_1+bv_2&=0, \\
    av_1-\lambda v_2+bv_3&=0, \\
    av_2-\lambda v_3+bv_4&=0, \\
    \vdots \\
    av_{N-2}-\lambda v_{N-1}+bv_N&=0, \\
    av_{N-1}-\lambda v_N&=0,
\end{align*}

By defining $v_0=v_{N+1}=0$ as the boundary condition, we can establish for the above equations the following solution:
\begin{equation}
    v_i=C_1x_1^i+C_2x_2^i \label{vi},
\end{equation}
where $C_1$ and $C_2$ are complex constants, $x_1$ and $x_2$ are the two roots of the equation:
\begin{equation}
    a-\lambda x+bx^2=0, \label{roots}
\end{equation}
which also gives $x_1 x_2=\dfrac{a}{b}$ and $x_1+x_2=\lambda/b$.

In turn, the boundary condition $v_0=v_{N+1}=0$ imposes:
\begin{align*}
    0&=C_1+C_2, \\
    0&=C_1x_1^{N+1}+C_2x_2^{N+1},
\end{align*}
which requires:
\begin{equation}
    \Big(\frac{x_1}{x_2}\Big)^{N+1}=1,
\end{equation}
\begin{equation}
    \frac{x_1}{x_2}=e^{i2\pi n/(N+1)}, \quad n=1,2,\cdots,N. \label{x1/x2}
\end{equation}

Putting together Eqs. (\ref{roots}) and (\ref{x1/x2}), we obtain:
\begin{align}
    x_1&=\Big(\frac{a}{b}\Big)^{\frac{1}{2}}e^{i\pi n/(N+1)}, \\
    x_2&=\Big(\frac{a}{b}\Big)^{\frac{1}{2}}e^{-i\pi n/(N+1)},
\end{align}
which yields the eigenvalues $\lambda=b(x_1+x_2)$ of the matrix in Eq. (\ref{Toeplitz}):
\begin{equation}
    \lambda_n=2(ab)^{\frac{1}{2}}\cos\frac{n\pi}{N+1}, \quad n=1,2,\cdots,N.
\end{equation}

Note that we have implicitly assumed that two roots of Eq. (\ref{roots}) are not equal in establishing the solution in Eq. (\ref{vi}). Otherwise Eq. (\ref{vi}) should be
\begin{equation}
    v_i=(C_1+C_2i)x^i,
\end{equation}
a solution will not hold in this case, as $v_i= (C_1+C_2i) x^i$ only yields a trivial solution $C_1=C_2=0$ under the boundary condition $v_0=v_{N+1}=0$.

\section{Algorithm for computing OBC spectrum of a general single-band non-Hermitian systems}
\label{compute_OBC}

To compute the OBC spectrum and eigenstates, one can directly diagonalize the non-Hermitian matrix. However, such eigenvalue problems' results are highly susceptible to accumulated numerical errors as the non-Hermitian model's lattice size $N$ becomes large and may be inaccurate. Here, we employ an algorithm valid for a general single-band non-Hermitian Hamiltonian based on the GBZ theory \cite{Beam1993}. We seek all values of $E=a(\beta)=\sum_{n=-p}^qa_n\beta^n$ that satisfy $|\beta_p(E)|=|\beta_{p+1}(E)|$. Assume that
\begin{equation}
    \beta_p=\beta_0e^{i\phi_l}, \quad \beta_{p+1}=\beta_0e^{-i\phi_l},
\end{equation}
where $\phi_l=l\pi/(M+1)$, $l=1,2,\cdots,M$, and $M$ is the number of points on the energy spectrum. We set $M=10000$, sufficient for all intentions and purposes throughout this work. 

Since we have: 
\begin{subequations}
\begin{eqnarray}
    E =a(\beta_p)= \sum\limits_{n=-p}^q a_n\beta_0^ne^{in\phi_l}, \label{Ea}\\
    E =a(\beta_{p+1})= \sum\limits_{n=-p}^qa_n\beta_0^ne^{-in\phi_l}, \label{Eb}
\end{eqnarray}
\end{subequations}
we obtain an equation for $\beta_0$:
\begin{equation}
    \sum\limits_{k=-p}^qa_n\beta_0^n\sin\frac{nl\pi}{M+1}=0. \label{algorithm}
\end{equation}

We can compute the  OBC spectrum for each $l$ in $1,2,\cdots,M$ following the algorithmic steps:
\begin{enumerate}
    \item [(i).] Solve Eq. (\ref{algorithm}) and obtain $p+q$ roots $\beta_0$.
    \item [(ii).] For each $\beta_0$ obtained above, check whether $\beta_a=\beta_0e^{i\phi_l}$ and $\beta_b=\beta_0e^{-i\phi_l}$ are $\beta_p$ and $\beta_{p+1}$. This is accomplished as follows:
    \item [(iia).] Insert $\beta_0$ into Eq. (\ref{Ea}) or Eq. (\ref{Eb}) to obtain $E$, substitute this $E$ back into $E=\sum_{n=-p}^qa_n\beta^n$, solve this equation to determine the remaining $(p+q-2)$ roots.
    \item [(iib).] If $|\beta_a|=|\beta_b|=|\beta_p|=|\beta_{p+1}|$, then $E$ is on the OBC spectrum; otherwise, $E$ is not on the OBC spectrum. Return to (iia) until all $\beta_0$ have been checked. 
    \item [(iii).] Return to step (i) for the next $l$ until $l=M$. 
\end{enumerate}

\section{Some theorems of complex analysis} \label{Appendix: maximum modulus principle}

The maximum modulus principle states \cite{lang2003}:

\begin{theorem}
    Let $U$ be an open region of $\mathbb{C}$, $f$ is an analytic function on $U$. If $z_0\in U$ is maximum point for $|f|$, that is $|f(z_0)|\geq |f(z)|$ for all $z\in U$, then $f$ is constant on $U$. Namely, $|f|$ only achieves its maximum on the boundary $\partial U$.
\end{theorem}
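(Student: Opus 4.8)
The plan is to prove the maximum modulus principle through the mean value property that follows from Cauchy's integral formula, and then to propagate local constancy across $U$ by a connectedness argument. Throughout I interpret ``open region'' as an open \emph{connected} subset $U\subset\mathbb{C}$, since connectedness is indispensable for the last step and the conclusion genuinely fails on disconnected domains.

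First I would fix a closed disk $\overline{D}(z_0,r)\subset U$ and apply Cauchy's integral formula along its bounding circle to obtain the mean value property
\[
f(z_0)=\frac{1}{2\pi}\int_0^{2\pi} f(z_0+re^{i\theta})\,d\theta.
\]
Taking absolute values and inserting the maximality hypothesis $|f(z_0+re^{i\theta})|\le|f(z_0)|$ yields the chain
\[
|f(z_0)|\le\frac{1}{2\pi}\int_0^{2\pi}|f(z_0+re^{i\theta})|\,d\theta\le|f(z_0)|,
\]
which forces equality throughout. Because the integrand is continuous and dominated by the constant $|f(z_0)|$, equality of its average with that constant forces $|f(z_0+re^{i\theta})|=|f(z_0)|$ for every $\theta$. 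Letting $r$ range over a small interval $(0,r_0)$, this shows that $|f|$ is constant on a neighborhood of $z_0$.

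The second step is to upgrade ``$|f|$ constant'' to ``$f$ constant'' on that neighborhood. If the common value of $|f|$ is zero, then $f\equiv 0$ there and we are done. Otherwise $f$ is nonvanishing, so $|f|^2=f\bar f$ being constant makes $\bar f=c^2/f$ analytic as well; then $\mathrm{Re}(f)=(f+\bar f)/2$ is an analytic real-valued function and hence constant (by the Cauchy-Riemann equations), and likewise $\mathrm{Im}(f)$ is constant, so $f$ itself is constant and equal to $f(z_0)$ on an open neighborhood of $z_0$. Finally I would propagate this globally: since $f-f(z_0)$ vanishes on a nonempty open subset of the connected set $U$, the identity theorem gives $f\equiv f(z_0)$ throughout $U$. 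The stated corollary, that $|f|$ attains its maximum only on $\partial U$, then follows immediately, since a nonconstant $f$ admits no interior maximum of $|f|$.

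The hard part will be the passage from ``$|f|$ constant'' to ``$f$ constant,'' where the degenerate case $f\equiv 0$ must be separated out and the Cauchy-Riemann bookkeeping handled with care; the other point requiring explicit attention is the connectedness of $U$, without which the identity theorem cannot carry the local conclusion to the whole domain.
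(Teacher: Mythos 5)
Your proof is correct and follows essentially the same route as the paper's: Cauchy's integral formula gives the mean value property, maximality forces $|f|$ to be constant on a small disc around $z_0$, and the identity theorem propagates constancy to all of the connected region $U$. In fact your version is slightly more complete, since the passage from ``$|f|$ locally constant'' to ``$f$ locally constant'' --- which you handle by separating the case $f\equiv 0$ and otherwise noting that $\bar f = c^2/f$ is analytic, so $\mathrm{Re}(f)$ and $\mathrm{Im}(f)$ are real-valued analytic and hence constant by Cauchy--Riemann --- is asserted in the paper without justification.
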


\begin{proof}
    Choose $\delta>0$ so that the disc  $D(z_0,\delta)\subset U$. Choose $0<r<\delta$ and then use Cauchy integral formula: 
    \begin{equation}
        f(z_0)=\frac{1}{2\pi i}\int_{|z-z_0|=r}\frac{f(z)}{z-z_0}dz.
    \end{equation}
    
    Rewrite the above formula in terms of the parametrisation $z=z_0+re^{i\theta}, 0\leq \theta\leq 2\pi$:
    \begin{align}
        f(z_0)&=\frac{1}{2\pi i}\int_0^{2\pi}\frac{f(z_0+re^{i\theta})}{re^{i\theta}}ire^{i\theta}d\theta \\
        &=\frac{1}{2\pi}\int_0^{2\pi}f(z_0+re^{i\theta})d\theta 
    \end{align}
    The definition of $z_0$ implies that:
    \begin{align}
        |f(z_0)|&\leq \frac{1}{2\pi}\int_0^{2\pi}|f(z_0+re^{i\theta})|d\theta\\
        &\leq \frac{1}{2\pi}\int_0^{2\pi}|f(z_0)|d\theta=|f(z_0)| \label{inequality}
    \end{align}
    since $|f(z_0+re^{i\theta})|\leq |f(z_0)|, \forall \theta$. 
    
    Then Eq. (\ref{inequality}) indicates that $|f(z_0+re^{i\theta})|= |f(z_0)|$ for all $\theta$ and $0<r<\delta$. This means that $f(z)$ is locally constant near $z=z_0$. Then apply the following theorem \cite{lang2003}:
    
    \begin{theorem}[Identity theorem]
        Let $\Omega$ be an open region of $\mathbb{C}$. If $f$ and $g$ are analytic on $\Omega$ and $\{z\in\Omega:f(z)=g(z)\}$ has an limit point in $\Omega$, then $f\equiv g$ in $\Omega$.
    \end{theorem}
     
    $f(z)$ is globally constant on $U$.
\end{proof}

Apply the maximum modulus principle to $e^{-if(z)}$ and $e^{if(z)}$ respectively, we obtain:

\begin{theorem}
    Let $U$ be an open region of $\mathbb{C}$, $f$ is an analytic function on $U$. Then $\text{Im}(f)$ achieves both its maximum and minimum on the boundary $\partial U$.
\end{theorem}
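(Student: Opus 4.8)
The plan is to reduce both assertions — that $\text{Im}(f)$ attains its maximum and its minimum on $\partial U$ — to the maximum modulus principle (Theorem 1), applied to the two auxiliary analytic functions $e^{if}$ and $e^{-if}$ indicated by the preceding remark. The guiding observation is that exponentiation converts the imaginary part of an analytic function into a modulus, a quantity whose extrema we already control.

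First I would check analyticity and compute the relevant moduli. Since $f$ is analytic on $U$ and the exponential is entire, both $e^{if(z)}$ and $e^{-if(z)}$ are analytic on all of $U$. Writing $f=\text{Re}(f)+i\,\text{Im}(f)$, one has $if=-\text{Im}(f)+i\,\text{Re}(f)$, so that
\begin{equation}
    |e^{if(z)}|=e^{-\text{Im}(f(z))}, \qquad |e^{-if(z)}|=e^{\text{Im}(f(z))}.
\end{equation}
Because $t\mapsto e^{t}$ is strictly increasing, maximizing $|e^{-if}|$ is equivalent to maximizing $\text{Im}(f)$, whereas maximizing $|e^{if}|$ is equivalent to minimizing $\text{Im}(f)$.

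Then I would invoke Theorem 1 twice. Applying the maximum modulus principle to $e^{-if}$ shows that $|e^{-if}|$, and hence $\text{Im}(f)$, attains its maximum on $\partial U$; applying it to $e^{if}$ shows that $|e^{if}|$ attains its maximum, equivalently $\text{Im}(f)$ attains its minimum, on $\partial U$. Together these establish both halves of the statement.

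There is no genuine analytic obstacle here; the computation is routine. The only point requiring care is the monotonicity bookkeeping — keeping straight that $e^{if}$ governs the minimum of $\text{Im}(f)$ while $e^{-if}$ governs its maximum, a consequence of the sign flip in the exponent. Implicitly one also assumes the standard setting in which $\overline{U}$ is compact, so that the extrema are actually attained, consistent with the phrasing of Theorem 1.
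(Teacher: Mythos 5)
Your proposal is correct and follows exactly the paper's route: the paper's proof consists precisely of applying the maximum modulus principle to $e^{-if}$ and $e^{if}$, which you spell out with the explicit computations $|e^{if}|=e^{-\text{Im}(f)}$ and $|e^{-if}|=e^{\text{Im}(f)}$. You in fact supply more detail (the monotonicity bookkeeping and the compactness caveat) than the paper, which states the reduction in a single sentence.
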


\section{Numerical results}
\label{example}

In the main text, we have analyzed the complex Hatano-Nelson model and Hatano-Nelson model with next-nearest-neighbor hopping in an analytical manner, whose $p=1$, $q=1$ and $p=1$, $q=2$, respectively. We have also numerically examined more complicated one-band non-Hermitian models with higher values of $p$ and $q$ and found that the equivalence between the loop property of the preimage and real OBC spectrum holds in general cases. 

For example, we demonstrate the numerical results of two models:
\begin{eqnarray}
H_3 &=& \sum_n (\alpha_3 c^\dagger_{n+2} + 3 c^\dagger_{n+1} + 3.2 c^\dagger_{n-1}+ c^\dagger_{n-2}+ 0.5 c^\dagger_{n-3}) c_n \nonumber \\
H_4 &=& \sum_n (c^\dagger_{n+1} + 1.5 c^\dagger_{n-1} + 0.3 c^\dagger_{n-2}+ 0.2 c^\dagger_{n-3}+ \alpha_4 c^\dagger_{n-4}) c_n, \nonumber \\
\label{app:h3h4}
\end{eqnarray}
whose characteristic polynomials are:
\begin{align}
a_3(\beta)&=\alpha_3\beta^{-2}+3\beta^{-1}+3.2\beta+\beta^2+0.5\beta^3, \\
a_4(\beta)&=\beta^{-1}+1.5\beta+0.3\beta^2+0.2\beta^3+\alpha_4\beta^4,
\end{align}
and $(p, q)=(2, 3)$ and $(1, 4)$, respectively.  

We summarize typical results in Figs. \ref{numerical_example1}-\ref{numerical_example2}: whether the preimage $a_i^{-1}(\beta)$ has the loop property or not is entirely consistent with the corresponding OBC spectrum being purely real or retaining complex branches, as we vary the hopping coefficients $\alpha_3$ and $\alpha_4$. These results are consistent with our conclusions in the main text.

\begin{figure*}[!ht]
    \centering
    \includegraphics[width=0.98\textwidth]{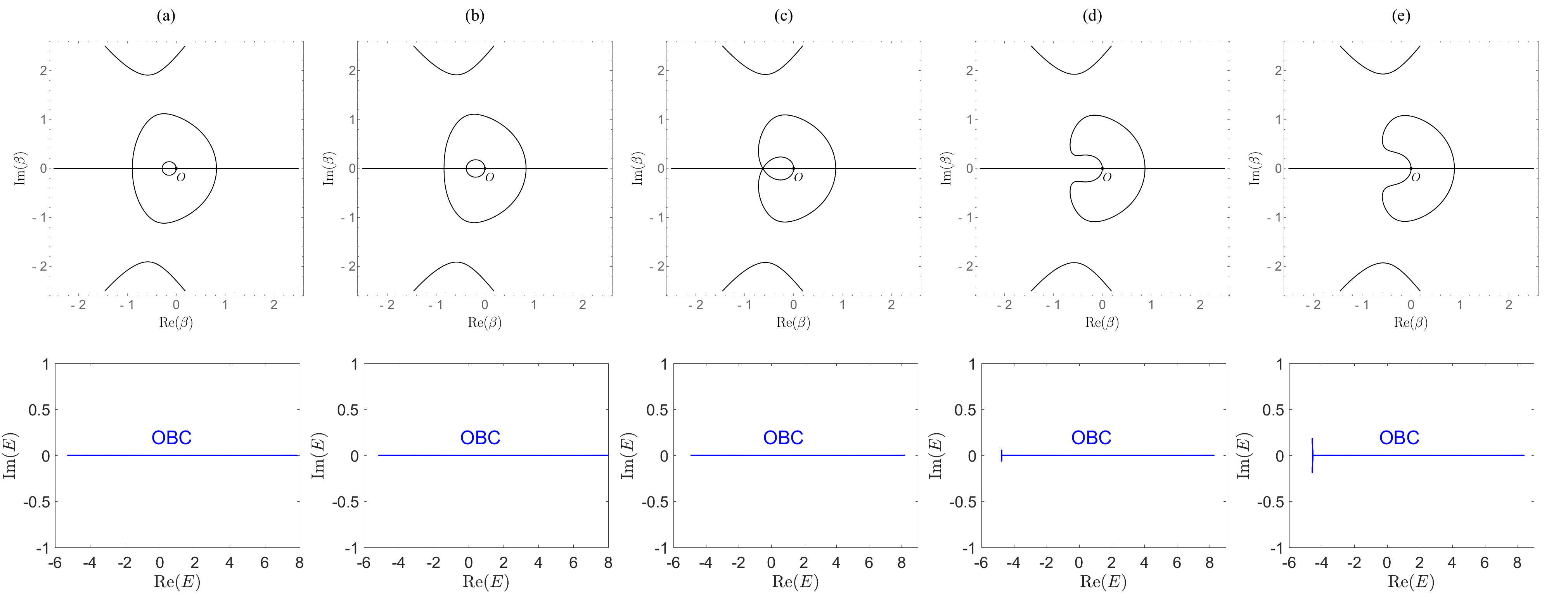}
    \caption{The preimage $a^{-1}(\mathbb{R})$ (above) and the OBC spectrum (below) for the single-band non-Hermitian tight-binding Hamiltonian $H_3$ in Eq. (\ref{app:h3h4}) with the characteristic polynomial  $a_3(\beta)=\alpha_3\beta^{-2}+3\beta^{-1}+3.2\beta+\beta^2+0.5\beta^3$ and various values of the hopping coefficient $\alpha_3$: (a) $\alpha_3=0.4$, (b) $\alpha_3=0.5$, (c) $\alpha_3=0.628$, (d) $\alpha_3=0.7$, and (e) $\alpha_3=0.8$.}
    \label{numerical_example1}
\end{figure*}

\begin{figure*}[!ht]
    \centering
    \includegraphics[width=0.98\textwidth]{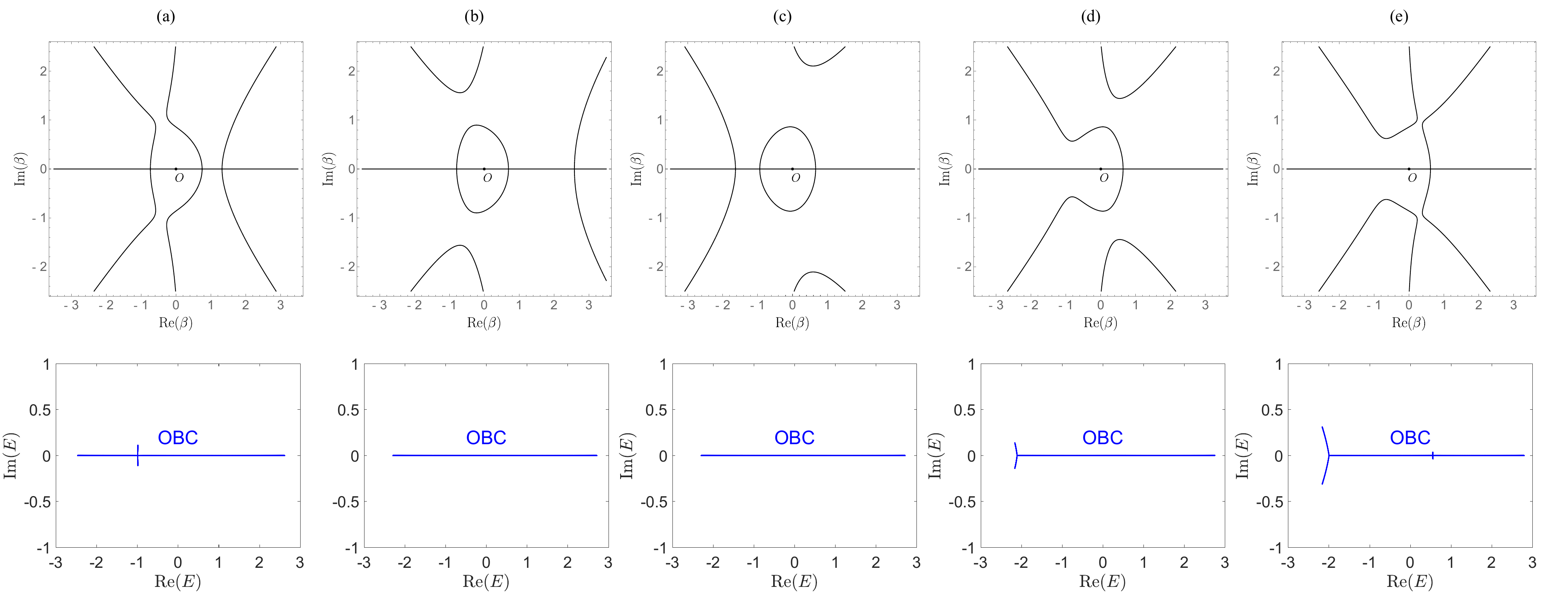}
    \caption{The preimage $a^{-1}(\mathbb{R})$ (above) and the OBC spectrum (below) for the single-band non-Hermitian tight-binding Hamiltonian $H_4$ in Eq. (\ref{app:h3h4}) with the characteristic polynomial  $a_4(\beta)=\beta^{-1}+1.5\beta+0.3\beta^2+0.2\beta^3+\alpha_4\beta^4$ and various values of the hopping coefficient $\alpha_4$: (a) $\alpha_4=-0.3$, (b) $\alpha_4=-0.1$, (c) $\alpha_4=0.1$, (d) $\alpha_4=0.3$, (e) $\alpha_4=0.6$.}
    \label{numerical_example2}
\end{figure*}

\clearpage

\bibliography{refs}

\end{document}